\newcommand\be{\begin{equation}}
\newcommand\ee{\end{equation}}
\newcommand{\bpm}{\begin{pmatrix}}
\newcommand{\epm}{\end{pmatrix}}
\newcommand{\bmm}{\begin{matrix}}
\newcommand{\emm}{\end{matrix}}
\begin{document}

\title{Counterexamples in the Levin-Wen model, group categories, and Turaev unimodality}

\author{Spencer D. Stirling}
\email{stirling@physics.utah.edu}
\affiliation{Department of Physics and Astronomy,
  University Of Utah, Salt Lake City, UT 84112, USA}
\affiliation{Department of Mathematics,
  University of Utah, Salt Lake City, UT 84112, USA}

\date{\today}



\newcommand{\Vcat}{\mathcal{V}}
\newcommand{\id}{\text{id}}
\newcommand{\grp}{\mathcal{D}}
\newcommand{\trio}{{(\grp,q,c)}}
\newcommand{\triotrunc}{{(\grp,q)}}
\newcommand{\grpcat}{\mathscr{C}_\triotrunc}
\newcommand{\qmodz}{\mathbb{Q}/\mathbb{Z}}
\newcommand{\Hom}{\text{Hom}}
\newcommand{\modfunct}{\mathscr{F}}
\newcommand{\unitobj}{\mathbb{1}}
\newcommand{\idmat}{I}
\newcommand{\cplx}{\mathbb{C}}
\newcommand{\real}{\mathbb{R}}
\newcommand{\integers}{\mathbb{Z}}
\newcommand{\ribv}{\text{Rib}_\Vcat}
\newcommand{\ribi}{\text{Rib}_I}
\newcommand{\progplanar}{\text{ProgPlanar}}
\newcommand{\progthreed}{\text{Prog3D}}
\newcommand{\polarplanar}{\text{PolarPlanar}}
\newcommand{\piv}{\text{piv}}
\newcommand{\qtr}{\text{tr}_q}
\newcommand{\qdim}{\text{dim}_q}

\setcounter{secnumdepth}{1}
\numberwithin{equation}{section}

\theoremstyle{plain}
\newtheorem{theorem}[equation]{Theorem}
\newtheorem{proposition}[equation]{Proposition}
\newtheorem{corollary}[equation]{Corollary}
\newtheorem{lemma}[equation]{Lemma}
\newtheorem{fact}[equation]{Fact}
\newtheorem{conjecture}[equation]{Conjecture}
\theoremstyle{definition}
\newtheorem{definition}[equation]{Definition}
\newtheorem{example}[equation]{Example}
\theoremstyle{remark}
\newtheorem{remark}[equation]{Remark}

\begin{abstract}
We remark on the claim that the string-net model of Levin and Wen \cite{levin_wen}
is a microscopic
Hamiltonian formulation of the Turaev-Viro topological quantum field theory \cite{turaevviro}.
Using simple counterexamples
we indicate where interesting extra structure may be needed in the Levin-Wen model for
this to hold (however we believe that some form of the correspondence is true).

In order to be accessible to the condensed matter community we provide a very brief and
gentle introduction to the relevant concepts in category theory (relying heavily on analogy
with ordinary group representation theory).  Likewise, some physical ideas are 
briefly surveyed for
the benefit of the more mathematical reader.

The main feature of group categories under consideration is Turaev's unimodality.  We pinpoint
where unimodality should fit into the Levin-Wen construction, and show that the simplest example
\cite{levin_wen} fails to be unimodal.  
Unimodality is straightforward to compute for group categories, and we provide a complete classification
at the end of the paper.

\end{abstract}
\maketitle



\section{Introduction}
In this note we briefly consider some aspects in the
relationship between the Levin-Wen string-net model \cite{levin_wen}
and the Turaev-Viro/Barrett-Westbury state sum model \cite{turaev,barrettwestbury}.
Although some work towards a correspondence proof has
been provided \cite{rasetti1,rasetti2}, we emphasize an important extra
structure appears necessary in the Levin-Wen model.  This structure may have physical implications.

Although the details come from category theory, for the physics community we use simple
analogies with ordinary group representation theory.  Also, our discussion of the physics is kept
brief and simple so that interested mathematicians may have access.

The key issue is that $6$j-symbols with \textit{tetrahedral symmetry} play a defining role in both models.
In order to build $6$j-symbols with tetrahedral symmetry,
Turaev's construction uses the \textit{unimodality} condition.
In particular unimodality assures that the ``band-breaking'' maneuver depicted in
Fig~(\ref{fig:turaevbandbreak}) is well-defined.

However, in the Levin-Wen model the analogous maneuver
(see Fig~(\ref{fig:unimodalmove})) is implicitly allowed without restriction.  To highlight
how this may be problematic, we show here that the first example computed by Levin-Wen 
is, in fact, \textbf{not} unimodal.  This implies that the Turaev construction cannot be applied.

This example, along with most of the examples computed by Levin-Wen, are
\textit{group categories} \cite{stirling_thesis}.
At the end we formulate a theorem that clarifies conditions when a group category is unimodal.

Before proceeding into the heart of the examples, we mention the recent work of
Hong \cite{seungmoonhong} that (partially) generalizes the Levin-Wen model to unitary spherical
categories using so-called \textit{mirror conjugate symmetry} rather than tetrahedral
symmetry.  It would be interesting to explore the examples considered here in this
more general context.

\section{Levin-Wen model}
Levin-Wen \cite{levin_wen} consider a model on a fixed trivalent graph (with
oriented edges) embedded in $2$d.  A
typical configuration is pictured in Fig~(\ref{fig:samplestringnet}) where
each edge is labelled by a \textbf{string type} $j$.  There are finitely-many string
types, hence we simply refer to them by number $\{0,1,\ldots,N\}$.  

There is a duality $j\mapsto j^*$
that satisfies $()^{**}=()$, and the $0$ label means ``no string'', hence we require
$0^*=0$.  We are allowed to reverse the orientation
of an edge if we reverse its label $j\mapsto j^*$.
\begin{figure}[!htb]
  \centering
  \scalebox{1}{\input{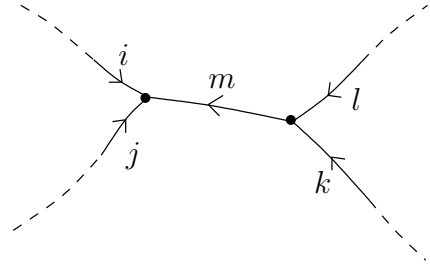}}
  \caption{A sample string-net configuration: a trivalent lattice in $2$-dimensions 
   with strings ``colored'' by integer string types $i$.  In category language the labels are
   simple objects $V_i$.  The dashed lines indicate that
   the string-net continues outside of the picture.}
  \label{fig:samplestringnet}
\end{figure}

In order to quantize this Levin and Wen introduce a Hilbert space.  It is defined by
promoting every configuration $X$ to be an orthonormal basis vector $\vert X\rangle$.  An arbitrary state
in the Hilbert space is a formal linear combinations of these.
Given any string-net configuration $\vert X\rangle$, 
an arbitrary state $\vert\Phi\rangle$ has an associated probability amplitude
$\Phi(X):=\langle X|\Phi \rangle$ of being in that particular configuration.
From now on we do not differentiate between states $\vert \Phi\rangle$ and their wavefunctions $\Phi$.

We are interested in a subspace of states $\{\Phi\}$ that encodes the topological information of the phase.  
One of their main results \cite{levin_wen} is that $\{\Phi\}$ can be realized as the ground state
subspace of an exactly-soluble Hamiltonian.

Rather than write down an explicit Hamiltonian, it is instructive to review the
strategy that Levin-Wen use to find the topological subspace $\{\Phi\}$.  The idea
is based on \textit{renormalization semigroup (RG)} flow.  For concreteness suppose that
we start with a (possibly too complicated) Hamiltonian $H$.

Usually $H$ has some parameters that can be adjusted (e.g. coupling constants), 
and as we adjust these parameters
we also deform its ground state(s) $\Phi$ (see Fig~(\ref{fig:rgflow})).  
Hence we actually have a family $\{(H_\alpha,\Phi_\alpha)\}_\alpha$ of
different (but related) theories where $\alpha$ denotes all of the adjustable parameters.
\begin{figure}[!htb]
  \centering
  \scalebox{1}{\input{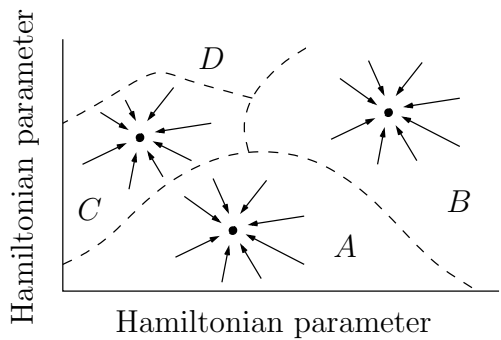}}
  \caption{Cartoon phase diagram}
  \label{fig:rgflow}
\end{figure}

On the other hand, often we do not know the exact Hamiltonian $H$ (nor the true parameters $\alpha$), or if we
do then its complexity is intractable.  Fortunately, for a fixed set of parameters $\alpha$
(say in the ``$A$'' region in Fig~(\ref{fig:rgflow}))
the Hamiltonian $H_\alpha$ is usually
approximated by a much simpler \textit{effective} Hamiltonian $H_{A(\alpha)}$
(equipped with its own adjustable parameters $A(\alpha)$ that are in principle 
derived from the original parameters $\alpha$).

If we adjust the original parameters $\alpha$ for $H_\alpha$ slightly, then 
the effective parameters $A(\alpha)$ adjust slightly. 

However, if we make a large adjustment to the parameters $\alpha$ (e.g. $\alpha$ is adjusted
into the ``$B$'' region) then
we cannot expect that the same approximations remain valid.  
Instead a different set of approximations may be appropriate, yielding a
different form for the effective Hamiltonian $H_{B(\alpha)}$.

In this way the family of theories $\{(H_\alpha,\Phi_\alpha)\}_\alpha$ is carved into
\textbf{phases}.  A phase $A$ is a region of parameters $\alpha$ where the Hamiltonians can all
can be approximated similarly.  Although the effective Hamiltonians still depend on adjustable
parameters $A(\alpha)$, the Hamiltonians are of the same \textit{form}.

Assume that we pick a Hamiltonian $H_{A}$ in the $A$ phase (we no longer refer
to the original parameters $\alpha$).
The RG flow procedure involves performing an iterated \textit{scaling out} while averaging the
details.  From experience we expect that this preserves the form of the Hamiltonian, but may
adjust the coupling constants, etc.  In other words, scaling is just a particular recipe to
flow to a new Hamiltonian $H_{A^\prime}$ that is in the same phase.  The new Hamiltonian is
simpler because at each scaling iteration we average out the details.

Eventually scaling produces a system that is minimally simplistic, and further scaling does
not lead to further simplification.  This produces a scale-invariant fixed point theory 
$(H,\Phi)_{\text{fixed}}$, 
i.e. a conformal field theory (CFT).

Since $\Phi_{\text{fixed}}$ is scale independent the
long-range properties of any other groundstate $\Phi_A$ in the phase 
are well-approximated by $\Phi_{\text{fixed}}$.
Hence in the following we concentrate only on $\Phi_{\text{fixed}}$,
and we drop the ``fixed'' subscript.

\subsection{6j-symbols}

Since the scale-invariant fixed point is somehow the ``simplest'' system, we
have the best chance of identifying the topological subspace there.
On the other hand, Levin-Wen make the ansatz that the topological phase is completely determined
by the data from a \textbf{modular tensor category} \cite{turaev}.

Combining these ideas,
they propose a set of rules at the fixed point that determines how any state $\Phi$ in the topological subspace
transforms under graphical deformations.

Before describing the rules, it is useful to discuss the origins of the basic data in category
theory.  Actually, they really 
only need some of the data, starting
with the list of string types $\{0,\ldots,N\}$ described above.  In category language the string types are
\textbf{simple objects} $\{V_0,V_1,\ldots,V_N\}$, and they can be viewed as generalized irreducible
group representations.  They satisfy fusion rules (generalized Clebsch-Gordan rules) 
\begin{equation}
\label{eq:fusionrules}
  V_i\otimes V_j = \oplus N^k_{ij}\,V_k\quad N^k_{ij}\in\integers 
\end{equation}
Order is important in the generalized tensor product that we encounter in
category theory (i.e. $V_i\otimes V_j\neq V_j\otimes V_i$).  From now on
we denote the integers $\{0,\ldots,N\}$ by $I$.

For simplicity, Levin and Wen restrict themselves to fusion rules satisfying
\begin{equation}
  N^k_{ij}=0\text{ or }1
\end{equation}
in which case the fusion rules are called \textbf{branching rules}.

The dual representation $(V_i)^*$ of an irrep is also an irrep, 
hence we denote it by $V_{i^*}$ with the
convention that $i^{**}=i$.  We shall see that this harmless-looking notation is at the
core of the problem, i.e. in a category this rule is subject to some conditions that
are automatically satisfied in ordinary group representation theory. 

Given the simple objects and branching rules $\delta_{ijm}:=N^{m^*}_{ij}=0\text{ or }1$, 
the necessary data is provided by
a system of tensors 
\begin{equation}
  (F^{ijm}_{kln},d_i)
\end{equation}
satisfying some consistency equations (there may be many solutions)
\begin{align}
  & F^{ijm}_{j^* i^* 0}=\frac{\sqrt{d_m}}{\sqrt{d_i}\sqrt{d_j}}\delta_{ijm}\\
  & F^{ijm}_{kln}=F^{lkm^*}_{jin}=F^{jim}_{lkn^*}=
    F^{imj}_{k^*nl}\frac{\sqrt{d_m}\sqrt{d_n}}{\sqrt{d_j}\sqrt{d_l}}\nonumber\\
  & \sum_{n\in I}F^{mlq}_{kp^*n}F^{jip}_{mns^*}F^{js^*n}_{lkr^*}=F^{jip}_{q^*kr^*}F^{riq^*}_{mls^*}\nonumber
\end{align}
The first line is a normalization condition.  The second is the tetrahedral symmetry, and the
third is the Biedenharn-Elliott identity.

If we know the underlying modular tensor category then we can find these tensors explicitly.  
$F^{ijm}_{kln}\in\cplx$ is the \textbf{quantum $6j$-symbol} and 
each $d_i\in\cplx$ is the \textbf{quantum dimension}
of $V_i$.
\footnote{More appropriately these should be called ``braided $6j$-symbol'' and 
``braided dimensions'' since ordinary group representation theory is also ``quantum'', i.e.
bosons and fermions.}
These notions are generalizations of the ordinary $6j$-symbols and vector space dimensions 
associated to irreps.

Returning to a wavefunction $\Phi$ in the topological subspace, at the fixed point
Levin-Wen propose transformation rules as in Fig~(\ref{fig:wavefunctionansatz})
(this list is not complete).
Since any graph
can be popped down to nothing using these rules, $\Phi$ is determined uniquely (however if we
embed the graph in a $2$d surface with nontrivial topology then $\Phi$ is not unique). 
\begin{figure}[!htb]
  \centering
  \input{wavefunctionansatz.pspdftex}
  \caption{Conditions to determine fixed-point ground state $\Phi$}
  \label{fig:wavefunctionansatz}
\end{figure}

\section{Group categories}
Group categories (or \textit{pointed} categories) have been studied by a variety of
authors in a variety of contexts (e.g. \cite{quinn} \cite{frolich_kerler} \cite{joyal_street}).
We recommend \cite{stirling_thesis} for further
details and compatible notation.

\subsection{Braiding}

Ordinary group representation theory is a reasonable prototype to model modular tensor
categories.  
\footnote{Actually we neither require nor desire the full structure of modular tensor categories,
hence instead we can consider the simpler structure of \textbf{semisimple ribbon Ab-categories}.
It happens that ordinary group representation theory is a closer prototype for semisimple ribbon
categories anyway.}
Consider a many-body bosonic or fermionic system in $(2+1)$-dimensions.  
Each elementary particle is a copy of an irreducible representation $V_i$.
If we have two particles then we have the tensor product (again ordering is
important)
\begin{equation}
  V_i\otimes V_j
\end{equation}
(which, of course, can be decomposed using the fusion rules in equation~(\ref{eq:fusionrules})).
For bosons we have an exchange rule
\begin{align}
  \text{Perm}:&V_i\otimes V_j\xrightarrow{\sim} V_j\otimes V_i\\
              &v\otimes w\mapsto w\otimes v\quad\quad v\in V_i, w\in V_j\nonumber
\end{align}
and for fermions we have a different exchange rule
\begin{align}
  \text{Perm}:&V_i\otimes V_j\xrightarrow{\sim} V_j\otimes V_i\\
              &v\otimes w\mapsto -w\otimes v\quad\quad v\in V_i, w\in V_j\nonumber
\end{align}
In both cases we have
\begin{equation}
  \text{Perm}^2=\idmat\quad\quad\text{(the identity matrix)}
\end{equation}
In this way both bosonic and fermionic systems are considered ``commutative'' since we know what
happens when we exchange the particles $V_i\otimes V_j\xrightarrow{\sim}V_j\otimes V_i$ (and two permutations
is equivalent to doing nothing).
Confusingly, in category language they are both called \textbf{symmetric}
theories (despite the usual physical nomenclature ``antisymmetric'' for fermions).

Categories allow a much more interesting \textit{weakened} form of commutativity: \textbf{braiding}.
Here we have invertible braiding matrices
\begin{equation}
  c_{V_i,V_j}:V_i\otimes V_j\xrightarrow{\sim} V_j\otimes V_i
\end{equation}
that do \textit{not} satisfy $c_{V_j,V_i}\circ c_{V_i,V_j}=\idmat$, but rather a more
elaborate set of conditions - the \textbf{hexagon relations}.

For details concerning the hexagon relations
we refer the reader to standard references
\cite{turaev}, \cite{bakalov_kirillov}, \cite{kassel}, \cite{joyal_street}.  It turns out that the hexagon
relations come from an obvious picture.

\subsection{Braiding for group categories}
The hexagon relations for group categories are exhaustively solved \cite{stirling_thesis}.
In fact the structure is much richer than will be apparent here.

Every group category $\grpcat$ is constructed from some fundamental data $\triotrunc$.  $\grp$ is
any finite abelian group, and $q$ is a function on $\grp$ that returns
a complex phase $\exp(2\pi i q)$.  Since $q$ is a phase it is only well-defined up to
the interval $[0,1]$.  In fact $q$ takes values in $\qmodz$.  The function $q$ must also be a
\textit{pure quadratic form}.

To understand the quadratic form the reader should think about quadratic functions on real numbers
$q(x):=\frac{1}{2}x^2$.  We can easily define an induced bilinear function:
$b(x,y):=q(x+y)-q(x)-q(y)=\frac{1}{2}(x+y)^2-\frac{1}{2}(x)^2-\frac{1}{2}(y)^2=xy$.

Likewise, a pure quadratic form is a function such that the induced function
\begin{align}
  &b:\grp\otimes\grp\rightarrow\qmodz\\
  &b(x,y) := q(x+y)-q(x)-q(y) \pmod{1}\nonumber
\end{align}
is bilinear.  The adjective ``pure'' means
\begin{equation}
\label{eq:purity}
  q(nx)\equiv n^2 q(x) \pmod{1}\quad\quad n\in\integers
\end{equation}

\begin{example}
\label{ex:z2groupcategory}
The simplest example is when $\grp=\integers_2=\{\hat{0},\hat{1}\}_+$, i.e. 
$\hat{1}+\hat{1}=\hat{2}\equiv \hat{0} \pmod{2}$ (we use the hat to differentiate
elements of the group from numbers in $\qmodz$).
Then, since 
\begin{equation}
  2 b(\hat{1},\hat{1})\equiv b(2\hat{1},\hat{1})\equiv b(\hat{2},\hat{1})
  \equiv b(\hat{0},\hat{1})\equiv 0 \pmod{1}
\end{equation}
(bilinearity is used in every step, and $b$ is valued in $\qmodz$) we have
two possibilities for $b$:
\begin{equation}
  b(\hat{1},\hat{1}) \equiv 0 \pmod{1}\quad\text{ or }\quad b(\hat{1},\hat{1}) \equiv \frac{1}{2} \pmod{1}
\end{equation}
Because of bilinearity the function $b$ is fully determined by its values on the generator $\hat{1}$
of the group $\grp$.  The same statement also holds for the quadratic form $q$ by using the purity
condition in equation~(\ref{eq:purity})).

Suppose we consider the first case, i.e. $b(\hat{1},\hat{1}) \equiv 0 \pmod{1}$.  Then there are
two possible pure quadratic forms that produce this $b$:
\begin{equation}
  q(\hat{1})\equiv 0 \pmod{1} \quad\text{ or }\quad q(\hat{1})\equiv \frac{1}{2} \pmod{1}
\end{equation}
It turns out that \textit{both} group categories $\grpcat$ (defined below)
constructed from these two quadratic forms
produce the same data $(F^{ijm}_{kln},d_i)$ and hence are the same (from the Levin-Wen
perspective).  They both correspond to $\integers_2$-lattice gauge theory.

More interesting examples occur when $b(\hat{1},\hat{1}) \equiv \frac{1}{2} \pmod{1}$.
Then there are
two possible pure quadratic forms that produce this $b$:
\begin{equation}
  q(\hat{1})\equiv \frac{1}{4} \pmod{1} \quad\text{ or }\quad q(\hat{1})\equiv \frac{3}{4} \pmod{1}
\end{equation}
The group category $\grpcat$ from the LHS corresponds \cite{stirling_thesis}
to $U(1)$ Chern-Simons at level $2$.  The other one corresponds to $U(1)$
Chern-Simons at level $-2$.

On the other hand, both theories produce the same data $(F^{ijm}_{kln},d_i)$ and
hence (from the Levin-Wen perspective) both produce the same \textit{doubled} Chern-Simons theory.
We shall show below that \textbf{neither example is unimodal}.

Since this is the first example considered
by Levin and Wen\cite{levin_wen},
we already encounter the example promised in the introduction.
\end{example}

We continue our construction of the group category $\grpcat$ given the data
$\triotrunc$.
The simple objects (string types) and fusion rules (branching rules) are easy to define.
The list of string types $\{0,\ldots,N\}$ is replaced by $\grp$.
Hence we use string labels like $i,j,k\in I$ interchangeably
with group elements $x,y,z\in\grp$ often here.

For every $x\in\grp$ we
define a simple object (these were denoted $V_i$ above)
\begin{equation}
  \cplx_x\quad x\in\grp
\end{equation}
which is a 1-dimensional complex vector space \textit{graded} by the group element $x$.

The fusion rules are also easy to define, and indeed satisfy the 
\textit{branching rule} condition $N^z_{xy}=0\text{ or }1$:
\begin{equation}
  \cplx_x\otimes \cplx_y = \delta_{z,x+y}\cplx_{z}
\end{equation}
Here $\delta$ is the Kronecker delta.

The braiding is slightly more subtle.  First, given the simple fusion rules, it is clear that the
braiding matrix
\begin{align}
  c_{x,y}:&\cplx_x\otimes\cplx_y \xrightarrow{\sim} \cplx_y\otimes\cplx_x\\
          &\cplx_{x+y}\xrightarrow{\sim}\cplx_{x+y}\nonumber
\end{align}
must be multiplication by a complex number.  For group categories this complex
number is a phase:
\begin{align}
  c_{x,y}:&\cplx_{x+y}\xrightarrow{\sim}\cplx_{x+y}\\
          &v_{x+y}\mapsto \exp(2\pi i s(x,y)) v_{x+y}\nonumber
\end{align}
where $v_{x+y}\in\cplx_{x+y}$ and $s(x,y)\in\qmodz$.

It remains to specify the phase angle $s(x,y)\in\qmodz$.
Every finite abelian group $\grp$ can be decomposed (non-uniquely) as a direct product
of cyclic groups $\integers_{n_1}\times\integers_{n_2}\times\ldots$.  
Pick a generator $\hat{1}_s$ for each cyclic group, and denote the order of that
cyclic group by $n_s$.  
Then an arbitrary element $x\in\grp$ can be written uniquely (once generators
are picked) as
\begin{equation}
  x=\sum_s x_s \hat{1}_s\quad\quad 0\leq x_s \leq n_s 
\end{equation}

Pick an arbitrary ordering $\hat{1}_1<\hat{1}_2<\ldots$ on the various generators.
Define for convenience
\begin{align}
  &q_s := q(\hat{1}_s)\\
  &b_{st} := b(\hat{1}_s,\hat{1}_t)\nonumber
\end{align}
(we refer the reader to \cite{stirling_thesis} for discussion concerning how these results change
for different choices of generators and orderings). 
Then if we write arbitrary elements $x,y\in\grp$ in terms of the generators
\begin{align}
  &x=\sum_s x_s \hat{1}_s\quad\quad 0\leq x_s \leq n_s\\
  &y=\sum_s y_s \hat{1}_s\quad\quad 0\leq y_s \leq n_s\nonumber  
\end{align}
then we define the braiding phase angle by
\begin{equation}
\label{eq:braidingphaseangle}
  s(x,y) := \sum_{s<t}x_s y_t b_{st} + \sum_s x_s y_s q_s
\end{equation}

\begin{example}
We revisit example~(\ref{ex:z2groupcategory}) in the case
\begin{equation}
  b(\hat{1},\hat{1})\equiv\frac{1}{2}\pmod{1}\quad\quad q(\hat{1})\equiv\frac{1}{4}\pmod{1}
\end{equation}
Then it is easy to compute the braiding matrices (here just phases)
\begin{align}
  &c_{\hat{0},\hat{0}}=\exp\left(2\pi i \left(0\cdot 0 \cdot \frac{1}{4}\right)\right)=1\label{eq:z2braidings}\\
  &c_{\hat{0},\hat{1}}=\exp\left(2\pi i \left(0\cdot 1 \cdot \frac{1}{4}\right)\right)=1\notag\\
  &c_{\hat{1},\hat{0}}=\exp\left(2\pi i \left(1\cdot 0 \cdot \frac{1}{4}\right)\right)=1\notag\\
  &c_{\hat{1},\hat{1}}=\exp\left(2\pi i \left(1\cdot 1 \cdot \frac{1}{4}\right)\right)=i\notag
\end{align}
\end{example}

\subsection{Twists}
For bosons and fermions in $(3+1)$-dimensions we have the \textit{spin-statistics theorem}.
This relates the effect of exchanging two particles (the exchange statistics when the $\text{Perm}$
operation is applied) to their individual spins.

On the other hand, the spin for an individual particle determines how it is affected under
$3$-dimensional rotations.  
Hence we may view the spin-statistics
theorem as a relationship between multi-particle $\text{Perm}$ operations and
single-particle rotation operations.

In $(2+1)$-dimensions we might imagine a similar story, except here the analogue of ``spin''
must describe what happens to an irrep (simple object) under $\text{SO}(2)$ rotations.  This is the
\textbf{twist} matrix
\begin{equation}
  \theta_i:V_i\xrightarrow{\sim}V_i
\end{equation}

Furthermore, we have seen that in $(2+1)$-dimensions $\text{Perm}$ can be generalized to
braiding $c_{V_i,V_j}$.
In categories with \textit{both} braiding and twist, it is reasonable to assume that a compatibility
relationship exists analogous to the spin-statistics theorem.  

Indeed there
is a necessary compatibility between braiding and twists 
(with an easy geometric interpretation in terms of ribbons, see e.g. \cite{stirling_wu_bcqm}) 
called \textbf{balancing}.
Balancing gives certain restrictions on twists (much like fermions have half-integer spin).

In the spirit of this paper we merely give a formula for the twist matrices for group categories.
Given a simple object $\cplx_x$ where $x\in\grp$
the twist matrix is just multiplication by a phase
\begin{align}
  \theta_x:&\cplx_x\xrightarrow{\sim}\cplx_x\\
           &v_x\mapsto \exp(2\pi i q(x))v_x\quad\quad v_x\in\cplx_x\nonumber
\end{align}
 
\begin{example}
We revisit the same example~(\ref{ex:z2groupcategory}) in the case
\begin{equation}
  b(\hat{1},\hat{1})\equiv\frac{1}{2}\pmod{1}\quad\quad q(\hat{1})\equiv\frac{1}{4}\pmod{1}
\end{equation}
Then it is easy to compute the twist matrices (here just phases)
\begin{align}
  &\theta_{\hat{0}}=\exp\left(2\pi i (0)\right)=1\label{eq:z2twists}\\
  &\theta_{\hat{1}}=\exp\left(2\pi i \left(\frac{1}{4}\right)\right)=i\notag
\end{align}
\end{example}

\subsection{Duality and ribbon categories}
Duality is a fundamental property of
representation theory.  Given an irreducible representation $V_i$ we can
always consider the dual representation $(V_i)^*$, which we dangerously
denoted $V_{i^*}$ above.

In category theory \textbf{rigidity} is the appropriate generalization, however the details can be
rather different than those in ordinary group representation theory.  For now, each simple object (string type) 
$V_i$ has another simple object associated to it:  its \textit{right} dual $(V_i)^*$.
Furthermore, we have a categorical notion of pair creation and annihilation.  These
are the \textbf{birth} and \textbf{death} matrices
\begin{align}
  &b_{V_i}:\unitobj\rightarrow V_i\otimes (V_i)^*\\
  &d_{V_i}:(V_i)^*\otimes V_i\rightarrow \unitobj\nonumber
\end{align}
($\unitobj$ is interpreted as the vacuum).

Notice that the ordering of the objects is exactly opposite for the birth and death matrices, 
hence we \textit{cannot} simply birth a pair and then annihilate it without performing
some intermediate moves (such as braiding and twisting).
\footnote{This is how the \textbf{quantum dimension} $d_i$ of a simple object $V_i$ 
is computed, for example.} 

In a category with braiding, twists, \textit{and} rigidity we may desire some compatibility
between all three structures (we already mentioned the ``balancing'' condition between
braiding and twists for example).  
Again in the spirit of this paper, rather than discuss this point further we
mention that a \textbf{ribbon category} is a category with braiding, twists, and
rigidity such that all three structures interact appropriately.  

Even for group categories rigidity is slightly subtle (again see \cite{stirling_thesis}).
However, for this paper it suffices to merely define when objects are dual to
each other.  The reader can guess that each simple object $\cplx_x$ has a right dual
\begin{equation}
  \cplx_{-x}
\end{equation}

\section{Unimodality}

We have specified enough information about $\grpcat$
in order to decide when a given group category is unimodal 
(we provide a classification theorem below).
Before doing this, however, let us motivate why unimodality is important.

In the Levin-Wen model each \textit{directed} edge is labelled 
by a string type $i$
(simple object $V_i$).  Because of the dangerous identification $(V_i)^*=V_{i^*}$
we can always
use simplified labels such as $i$ or $j$ or $m^*$, and we may always perform
a ``string-breaking'' move such as in figure~(\ref{fig:unimodalmove}).
\begin{figure}[!htb]
\begin{center}
  \centering
  \input{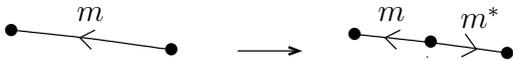}
  \caption{String-breaking maneuver is implicit in Levin-Wen model, but non-trivial in
    Turaev's construction.}
  \label{fig:unimodalmove}
\end{center}
\end{figure}

On the other hand, in Turaev's formulation \cite{turaev} there are non-trivial invertible
matrices
\begin{equation}
  w_{i^*}:V_{i^*}\xrightarrow{\sim}(V_i)^*
\end{equation}
that form part of the defining structure of the category.  They
also provide a ``band-breaking'' maneuver as in figure~(\ref{fig:turaevbandbreak}), however
the procedure may not be self-consistent.  It turns out that an inconsistency
may arise when simple objects are self-dual (i.e. when $V_i=V_{i^*}$),
and hence we must enforce the extra unimodality condition.
\begin{figure}[!htb]
  \centering
  \input{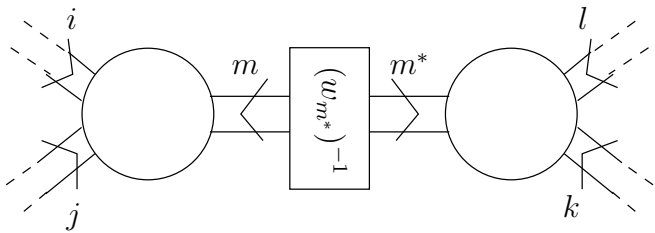}
  \caption{Band-breaking maneuver in Turaev's construction.}
  \label{fig:turaevbandbreak}
\end{figure}

Rather than go into details, we give the unimodality condition for group categories:
suppose $\cplx_x$ is self-dual
\begin{equation}
\label{eq:unimodalcondition}
  \cplx_x = \cplx_{-x}
\end{equation}
(in other words $x=-x\in\grp$).  Then unimodality is a condition on the twist and
braiding:
\begin{equation}
  \theta_x\cdot c_{x,x}\overset{?}{=}1\quad\quad\text{whenever }x=-x
\end{equation}
If this is true for all $x=-x$ then the category is \textbf{unimodal}.

\subsection{Counterexample}

Again we revisit the case $\grp=\integers_2$ in example~(\ref{ex:z2groupcategory})
(however \textit{not} $\integers_2$-lattice gauge theory).  Given the braiding and twist
tables provided in equations~(\ref{eq:z2braidings}) and (\ref{eq:z2twists}) we compute
that for the self-dual object $\hat{1}$
\begin{equation}
  \theta_{\hat{1}}\cdot c_{\hat{1},\hat{1}}=i\cdot i=-1
\end{equation}
This example is \textbf{not unimodal} and Turaev's machinery does not apply.
However, as already discussed this is $U(1)$ Chern-Simons at level $2$ and is the
first example considered by Levin-Wen \cite{levin_wen}.

Alternatively, we can suspect a discrepancy by examining the quantum dimension
$d_{\hat{1}}$ computed in \cite{levin_wen}.  There they assert that
\begin{equation}
  d_{\hat{1}}=-1
\end{equation}
On the other hand, it is proven in $\cite{stirling_thesis}$ that for group categories
$d_x=1$ for every simple object $\cplx_x$.
\footnote{Our result also agrees with the quantum dimension computed in the quantum group
$\text{SU}_q(2)$ at level $1$.  This is relevant since
rank-level duality asserts that $\text{SU}_q(n)$ at level $k$ is isomorphic to
$\text{SU}_q(k)$ at level $n$ whenever $n,k>1$.  If $n=1$ (or $k=1)$ then rank-level duality is
slightly modified:
$\text{U}(1)$ Chern-Simons at level $k$ is isomorphic to $\text{SU}_q(k)$ at level $1$.}

\section{Unimodality and group categories}

It is straightforward to classify conditions when a group category is
unimodal:

\begin{theorem}
Let $\grpcat$ be a group category.  Arbitrarily decompose the finite abelian group $\grp$ into
cyclic groups of orders $n_s$ and generators $\hat{1}_s$ ($s$ indexes the cyclic factors).
Pick an arbitrary ordering $\hat{1}_1 < \hat{1}_2 < \ldots$ for the generators.
Then we have the following cases:
\begin{enumerate}
 \item If $\grp$ contains a cyclic factor of odd order (i.e. if one of the $n_s$ is odd) then
    $\grpcat$ is unimodal.
 \item If $\grp$ contains no odd-order cyclic factors then $\grpcat$ is unimodal if and only if
\begin{equation}
  \sum_s \frac{1}{2}(n_s)^2 q_s \in\integers
\end{equation}
\end{enumerate}
\end{theorem}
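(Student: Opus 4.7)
The plan is to convert the unimodality identity $\theta_x c_{x,x}=1$ into an explicit Diophantine statement in the cyclic-factor data $(n_s,q_s,b_{st})$ and then classify by parity of the $n_s$.

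First I would simplify $\theta_x c_{x,x}$. Using $\theta_x=\exp(2\pi i q(x))$ together with the braiding formula \eqref{eq:braidingphaseangle} for $c_{x,x}=\exp(2\pi i s(x,x))$, the goal is to establish the algebraic identity $s(x,x)\equiv q(x)\pmod 1$. This follows by expanding $q(x)=q(\sum_s x_s\hat{1}_s)$ inductively through $q(a+b)=q(a)+q(b)+b(a,b)$ and the purity identity $q(x_s\hat{1}_s)=x_s^2 q_s$, which produces $q(x)=\sum_s x_s^2 q_s+\sum_{s<t}x_s x_t b_{st}$, exactly matching the formula for $s(x,x)$. Hence $\theta_x c_{x,x}=\exp(4\pi i q(x))$, and unimodality at $x$ reduces to the clean condition $2q(x)\in\integers$.

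Next I would describe the self-dual simple objects concretely. Writing $x=\sum_s x_s\hat{1}_s$, the equation $2x=0$ in $\grp$ holds iff $2x_s\equiv 0\pmod{n_s}$ for every $s$, which forces $x_s=0$ whenever $n_s$ is odd and $x_s\in\{0,n_s/2\}$ whenever $n_s$ is even. The self-dual elements are thus parametrized by indicator vectors $\epsilon\in\{0,1\}^J$, where $J$ indexes the even cyclic factors, via $x=\sum_{s\in J}\epsilon_s(n_s/2)\hat{1}_s$. Case 1 of the theorem then splits off: odd-order factors contribute nothing to any self-dual element, so the problem decouples onto the even sub-lattice, and when every $n_s$ is odd the set of self-dual elements collapses to $\{0\}$ and unimodality is vacuous.

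For Case 2 (all $n_s$ even), substituting the $\epsilon$-parametrization into $2q(x)$ and using $\epsilon_s^2=\epsilon_s$ yields
\[
  2q(x)\equiv\sum_s\epsilon_s\,\frac{n_s^2}{2}\,q_s+\sum_{s<t}\epsilon_s\epsilon_t\,\frac{n_s n_t}{2}\,b_{st}\pmod 1.
\]
The key integrality input is $n_s b_{st}\in\integers$, obtained from bilinearity of $b$ via $n_s b_{st}=b(n_s\hat{1}_s,\hat{1}_t)=b(0,\hat{1}_t)=0$; combined with $n_t$ being even, every cross term $\frac{n_s n_t}{2}b_{st}$ already lies in $\integers$. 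The family of conditions indexed by $\epsilon$ therefore collapses down to the single sum statement $\sum_s\frac{1}{2}n_s^2 q_s\in\integers$ of the theorem. The main obstacle is precisely this final collapse: naively each indicator $\epsilon$ produces its own diagonal condition, and controlling the whole family by the single sum requires careful bookkeeping that exploits the global purity of $q$ to tie the individual per-generator contributions $\frac{1}{2}n_s^2 q_s$ together modulo $\integers$.
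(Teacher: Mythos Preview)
Your overall strategy matches the paper's: both establish $s(x,x)=q(x)$ to reduce unimodality to $2q(x)\in\integers$, and then substitute the coordinates of self-dual $x$.  The divergence is in identifying which $x$ are self-dual.  The paper passes from $x=-x$ directly to $n_s-x_s=x_s$, hence $x_s=n_s/2$ for \emph{every} $s$; from this it obtains Case~1 at once (a single odd $n_s$ forbids $x_s=n_s/2$, so no nontrivial self-dual element exists at all) and in Case~2 has only the one element with all $x_s=n_s/2$ to check, producing exactly the single sum $\sum_s\tfrac{1}{2}n_s^2 q_s$.  You are more careful, correctly noting that $2x_s\equiv 0\pmod{n_s}$ also admits $x_s=0$, and so arrive at the full $\epsilon$-parametrized family of self-dual elements.

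That extra care, however, exposes the very gap you flag and does not resolve it.  Once the cross terms are killed, your condition reads $\sum_s\epsilon_s\,\tfrac{n_s^2}{2}q_s\in\integers$ for \emph{every} $\epsilon\in\{0,1\}^J$; taking $\epsilon$ supported on a single index forces each $\tfrac{n_s^2}{2}q_s\in\integers$ individually, which is strictly stronger than the single sum stated in the theorem.  Purity only supplies $n_s^2 q_s\in\integers$ (and bilinearity gives $2n_s q_s\in\integers$) \emph{separately} for each $s$, with no relation linking different $s$, so the ``careful bookkeeping'' you anticipate cannot collapse the family to the sum alone.  Your Case~1 argument likewise proves only the all-odd situation, not the stated claim that a single odd factor suffices; your decoupling observation shows odd factors are inert but does nothing to make the remaining even-factor conditions automatic.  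In short, your route departs from the paper precisely at the self-dual classification step, and following it honestly leads to a per-factor condition rather than to the paper's single sum.
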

\begin{proof}
Suppose $x=-x$.  Decompose $x$ and $-x$ using the generators
\begin{align}
  &x=\sum_s x_s \hat{1}_s\quad\quad 0\leq x_s\leq n_s\\
  &-x=\sum_s (n_s-x_s) \hat{1}_s\nonumber
\end{align}
The condition that $x=-x$ implies
\begin{equation}
  n_s-x_s = x_s\quad\forall s
\end{equation}
which implies that $n_s=2x_s$.  In particular this implies that every $n_s$ must be even,
hence if there exists any $n_s$ odd then $x\neq -x$ for every $x\in\grp$. 

Now suppose that every $n_s$ is even.  Then the braiding and twist formulas imply
\begin{equation}
  \theta_x\cdot c_{x,x}=\exp(2\pi i q(x))\exp(2\pi i s(x,x))
\end{equation}
Equation~(\ref{eq:braidingphaseangle}) says
\begin{equation}
   s(x,x) := \sum_{s<t}x_s x_t b_{st} + \sum_s (x_s)^2 q_s
\end{equation}
The quadratic form $q(x)=q\left(\sum_s x_s \hat{1}_s\right)$ can be decomposed using successive
iterations of the defining formula
\begin{equation}
  q(x+y)-q(x)-q(y)=b(x,y)
\end{equation}
rearranged as $q(x+y)=b(x,y)+q(x)+q(y)$.  
For example the first iteration is
\begin{align}
  q\left(\sum_s x_s \hat{1}_s\right) &=q\left(x_1 \hat{1}_1 +\sum_{s>1} x_s \hat{1}_s\right)\\
   &= b\left(x_1 \hat{1}_1,\sum_{s>1} x_s \hat{1}_s\right)+q(x_1 \hat{1}_1)\\ 
   &\quad +q\left(\sum_{s>1} x_s \hat{1}_s\right)\nonumber\\
   &=\sum_{s>1} x_1 x_s b_{1s} + (x_1)^2 q_1 +q\left(\sum_{s>1} x_s \hat{1}_s\right)\nonumber
\end{align}
We used bilinearity of $b$ and the fact that $q$ is pure in the last equality.  
Now, concentrating on the last term, we repeat the process.  Iterating we finally obtain
\begin{equation}
  q(x)=\sum_{s<t}x_s x_t b_{st}+\sum_s (x_s)^2 q_s
\end{equation}
which is precisely the formula for $s(x,x)$.  

Hence
\begin{equation}
  \theta_x\cdot c_{x,x}=\exp(2\pi i (q(x)+s(x,x))=\exp(2\pi i 2q(x))
\end{equation}
Recall that we are only considering $x\in\grp$ such that $x=-x$.  Also recall from
that beginning of the proof that then $n_s=2 x_s$.  Hence we substitute
$x_s=\frac{1}{2}n_s$ into the expression for $2q(x)$ and obtain
\begin{align}
 2q(x)&=\sum_{s<t}2 \frac{n_s}{2} x_t b_{st}+\sum_s 2(\frac{n_s}{2})^2 q_s\\
      &=\sum_{s<t} x_t n_s b_{st}+\sum_s \frac{1}{2}(n_s)^2 q_s\nonumber
\end{align}
But $x_t n_s b_{st}=x_t b(n_s \hat{1}_s,\hat{1}_t)=x_t b(0,\hat{1}_t)\equiv 0 \pmod{1}$, hence the
first sum is always an integer and does not play a role in the exponent.

Viewing the second factor we arrive at the desired result, i.e.
\begin{equation}
  \theta_x\cdot c_{x,x}=\exp(2\pi i 2q(x))=1\Leftrightarrow \sum_s \frac{1}{2}(n_s)^2 q_s\in\integers
\end{equation}
\end{proof}

\section{Acknowledgements}
During this work SDS was supported by an FQXi grant under Yong-Shi Wu at the University
of Utah.

\bibliographystyle{apsrev}
\bibliography{levinwengroupcategories}

\end{document}